\documentclass{article}
\usepackage{graphicx}  

\usepackage{amsthm}

\usepackage{xcolor}  
\usepackage[margin=1in]{geometry}

\usepackage{amsmath,amssymb,mathtools}

\newtheorem{lemma}{Lemma}

\newtheorem{theorem}{Theorem}

\newtheorem{corollary}{Corollary}

\newtheorem{proposition}{Proposition}

\usepackage{dcolumn}
\usepackage{bm}
\usepackage{hyperref}
\usepackage{subcaption}
\usepackage{braket}
\usepackage{caption}
\usepackage{xcolor}
\usepackage{diagbox}
\usepackage{booktabs}
\usepackage{adjustbox}
\usepackage{multirow}
\usepackage[mathlines]{lineno}
\usepackage{ragged2e}
\usepackage{pifont}
\usepackage{makecell}
\usepackage{algorithm}
\usepackage{algpseudocode}
\usepackage{soul}
\usepackage{bibunits}
\defaultbibliographystyle{apsrev4-2}
\usepackage{cleveref}
\usepackage{tikz}
\usepackage{array}
\usepackage{mathtools}
\usepackage{centernot}
\usepackage{cancel}

\newcommand{\R}{\mathbb{R}}

\newcommand{\Tr}{\mathrm{tr}}

\newcommand{\calM}{\mathcal{M}}
\newcommand{\justO}{\operatorname{O}}
\newcommand{\SO}{\operatorname{SO}}

\newcommand{\diag}{\operatorname{diag}}

\title{CHSH inequality always holds in bipartite qutrits with spin-1 observables}
\author{
    Hyunho Cha\\
    \small NextQuantum and Department of Electrical and Computer Engineering\\
    \small Seoul National University, Seoul 08826, Republic of Korea\\
    \small \texttt{ovalavo@snu.ac.kr}
}
\date{}

\begin{document}

\maketitle

\begin{abstract}
We resolve a conjecture of Hanotel and Loubenets concerning CHSH inequality in bipartite qutrits. It states that nonseparable pure states of two qutrits do not violate the CHSH inequality when each party is restricted to spin-1 observables. We prove a stronger result that \emph{all} bipartite states on $\mathbb{C}^3 \otimes \mathbb{C}^3$ satisfy the CHSH inequality under spin-1 measurements, regardless of whether the state is pure or mixed.
\end{abstract}

\section{Setup and main result}
Let
\begin{align*}
S_x = \frac{1}{\sqrt2}
\begin{pmatrix}
0&1&0\\
1&0&1\\
0&1&0
\end{pmatrix},\qquad
S_y=\frac{1}{\sqrt2}
\begin{pmatrix}
0&-i&0\\
i&0&-i\\
0&i&0
\end{pmatrix},\qquad
S_z =
\begin{pmatrix}
1&0&0\\
0&0&0\\
0&0&-1
\end{pmatrix}
\end{align*}
be the spin-$1$ matrices on $\mathbb C^3$. For a unit vector $u=(u_x,u_y,u_z)\in \R^3$, define
\[
S(u)=u_xS_x+u_yS_y+u_zS_z.
\]
For each $R\in \SO(3)$, let $U_R$ denote the spin-$1$ representation of $R$ \cite{kawaguchi2012spinor}, so that
\[
U_R S(u) U_R^{\dagger}=S(Ru)
\qquad (u\in \R^3).
\]
In particular, if $\|u\|=1$, then $S(u)$ is unitarily equivalent to $S_z$, hence has spectrum $\{-1,0,1\}$.

Let $\rho$ be a density operator on $\mathbb C^3\otimes \mathbb C^3$, and let $a,a',b,b'\in \R^3$ be unit vectors. The associated CHSH Bell operator is
\begin{align*}
\mathcal B(a,a',b,b')
= S(a)\otimes S(b)
+ S(a)\otimes S(b')
+ S(a')\otimes S(b)
- S(a')\otimes S(b').
\end{align*}


Hanotel and Loubenets \cite{hanotel2025nonviolation} conjectured that nonseparable pure states of two qutrits do not violate the CHSH inequality when each party is restricted to spin-1 observables. The following result establishes a stronger statement that no bipartite qutrit state violates the CHSH inequality under such measurements.

\begin{theorem}\label{thm:main}
For every density operator $\rho$ on $\mathbb C^3\otimes \mathbb C^3$ and every choice of unit vectors $a,a',b,b'\in \R^3$,
\[
\bigl|\Tr\bigl(\rho\,\mathcal B(a,a',b,b')\bigr)\bigr|\le 2.
\]
Equivalently, no two-qutrit state violates the CHSH inequality under spin-$1$ measurements.
\end{theorem}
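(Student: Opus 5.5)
The plan is to bound the operator norm, i.e.\ to show $\|\mathcal B(a,a',b,b')\|\le 2$, which gives $|\Tr(\rho\,\mathcal B)|\le 2$ for every density operator $\rho$. The usual qubit identity $\mathcal B^2=4I-[A,A']\otimes[B,B']$ is not available, because $S(u)^2\neq I$. I will instead use the vector structure of spin-$1$ observables and the covariance $U_RS(u)U_R^\dagger=S(Ru)$.

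\textbf{Step 1: merge Bob's two settings.} Since $S(\cdot)$ is linear, we can write
\[
\mathcal B=S(a)\otimes S(b+b')+S(a')\otimes S(b-b').
\]
For unit $b,b'$, the vectors $b+b'$ and $b-b'$ are orthogonal, with lengths $2\cos\theta$ and $2\sin\theta$ for some $\theta\in[0,\pi/2]$. Hence
\[
\mathcal B=2\bigl(\alpha\,S(a)\otimes S(c)+\beta\,S(a')\otimes S(d)\bigr),
\]
with $c\perp d$ unit vectors and $\alpha^2+\beta^2=1$. It then suffices to show $\|T\|\le 1$ for $T=\sum_{j,k}M_{jk}S_j\otimes S_k$, where
\[
M=\alpha\,a c^{T}+\beta\,a' d^{T}.
\]

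\textbf{Step 2: diagonalize $M$ using local rotations.} Because $c\perp d$, we have $M^TM=\alpha^2 cc^T+\beta^2 dd^T$. So the singular values of $M$ are $|\alpha|,|\beta|$, and $0$. Take a signed singular value decomposition $M=O_1\,\diag(s_1,s_2,0)\,O_2^T$ with $O_1,O_2\in\SO(3)$; any determinant sign is absorbed into $s_2$. Then $s_1^2+s_2^2=1$. Conjugating by $U_{O_1}\otimes U_{O_2}$ does not change the norm, and the covariance relation gives
\[
(U_{O_1}\otimes U_{O_2})^\dagger\, T\,(U_{O_1}\otimes U_{O_2})=s_1\,S_x\otimes S_x+s_2\,S_y\otimes S_y .
\]
So the theorem reduces to the one-parameter claim
\[
\|\cos\varphi\,S_x\otimes S_x+\sin\varphi\,S_y\otimes S_y\|\le 1\quad\text{for all }\varphi.
\]

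\textbf{Step 3: compute the spectrum.} This is the main obstacle. Note that only the circle $s_1^2+s_2^2=1$ may be used, not the square $|s_1|,|s_2|\le1$. For example, $S_x\otimes S_x+S_y\otimes S_y=S\cdot S-S_z\otimes S_z$ can have norm larger than $1$, so a naive convexity argument fails.

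I will write $S_x\otimes S_x$ and $S_y\otimes S_y$ in terms of $S_\pm$. Both operators change the total magnetic number $m_1+m_2$ only by $0$ or $\pm2$. Therefore $T_\varphi$ is block diagonal with respect to the parity of $m_1+m_2$:
\begin{itemize}
\item the odd block is spanned by $|m_1,m_2\rangle$ with $m_1+m_2=\pm1$ (4-dimensional);
\item the even block is spanned by $m_1+m_2\in\{-2,0,2\}$ (5-dimensional).
\end{itemize}
The swap symmetry and the $m\mapsto-m$ symmetry should split these further into $2\times2$ and small blocks. I will compute their eigenvalues explicitly as functions of $\cos\varphi$ and $\sin\varphi$. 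I expect expressions such as $\tfrac12(|\cos\varphi|+|\sin\varphi|)$ and $\tfrac12\sqrt{\cos^2\varphi+\sin^2\varphi+\dots}$, each bounded by $1$ via $\cos^2\varphi+\sin^2\varphi=1$. The delicate case is the $m_1+m_2=0$ sector, which contains $|1,-1\rangle,|0,0\rangle,|-1,1\rangle$. There I would first diagonalize using the symmetric/antisymmetric combinations of $|1,-1\rangle$ and $|-1,1\rangle$, and then bound the resulting $2\times2$ block directly.
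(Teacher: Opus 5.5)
Your overall route is the paper's. You write $\mathcal B=K(M)$ with $M$ of rank at most two, bring $M$ to a two-entry diagonal form by $\SO(3)\times\SO(3)$ covariance, and use that the squared singular values sum to a fixed constant. Your parity splitting into a $4$- and a $5$-dimensional block is also the paper's $V_4\oplus V_5$. There is a slip in Step~2, however, and a real gap in Step~3.

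The slip is that $M^TM\neq\alpha^2cc^T+\beta^2dd^T$ in general. Expanding leaves the cross term $\alpha\beta\,(a\cdot a')(cd^T+dc^T)$. Orthogonality of $c,d$ only removes the cross terms of $MM^T=\alpha^2aa^T+\beta^2a'a'^T$, and that matrix is still not diagonal unless $a\perp a'$. So the singular values are not $|\alpha|,|\beta|$: for $a=a'$ one gets $M=a(\alpha c+\beta d)^T$ with singular values $1,0,0$. What you actually use does survive. Since $\Tr(cd^T)=c\cdot d=0$, one has $s_1^2+s_2^2=\|M\|_{\mathrm F}^2=\alpha^2+\beta^2=1$, which is exactly Lemma~\ref{lem:sumsq}.

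The gap is that Step~3 carries the entire content of the bound, and you leave it as an expectation. Moreover, your sketch rests on a false invariance. Writing
\[
T=\tfrac{s_1-s_2}{4}\bigl(S_+\otimes S_++S_-\otimes S_-\bigr)+\tfrac{s_1+s_2}{4}\bigl(S_+\otimes S_-+S_-\otimes S_+\bigr)
\]
shows that the $m_1+m_2=0$ sector is not invariant when $s_1\neq s_2$. The $S_\pm\otimes S_\pm$ part sends $\ket{0,0}$ to $\tfrac{s_1-s_2}{2}(\ket{1,1}+\ket{-1,-1})$. Hence there is no $2\times2$ block on $\{(\ket{1,-1}+\ket{-1,1})/\sqrt2,\ \ket{0,0}\}$. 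Treating it as one would give $\pm|s_1+s_2|/\sqrt2$, which are not eigenvalues of $T$.

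Splitting the even block by the flip $\ket{m,n}\mapsto\ket{-m,-n}$ works instead. It gives two null vectors, $\ket{1,1}-\ket{-1,-1}$ and $\ket{1,-1}-\ket{-1,1}$. It also gives a genuine three-dimensional block on the paper's $w_1,w_2,w_3$, with matrix $\left(\begin{smallmatrix}0&0&p\\0&0&q\\p&q&0\end{smallmatrix}\right)$, where $p=(s_1-s_2)/\sqrt2$ and $q=(s_1+s_2)/\sqrt2$. Its eigenvalues are $0,\pm\sqrt{p^2+q^2}=\pm\sqrt{s_1^2+s_2^2}=\pm1$. The odd block gives $\pm s_1,\pm s_2$.

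So the spectrum is $\{0,0,0,\pm s_1,\pm s_2,\pm1\}$. The norm $1$ is attained exactly in the three-dimensional block you proposed to truncate, and guesses like $\tfrac12(|\cos\varphi|+|\sin\varphi|)$ do not occur. This computation is the paper's Proposition~\ref{prop:spectrum}. The paper chooses $\diag(s,0,t)$, which you can reach by applying the rotation $P$ (taking the $y$-axis to the $z$-axis) on both sides. That makes one term the diagonal $S_z\otimes S_z$ and simplifies the bookkeeping, but the decisive block is the same $3\times3$ one.
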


The proof will be obtained by reducing $\mathcal B(a,a',b,b')$ to a two-parameter operator whose spectrum can be computed exactly.

\section{Reduction to a two-parameter model}
For $M \in \calM_3(\R)$, define
\[
K(M):=\sum_{i,j=1}^3 M_{ij}\,S_i\otimes S_j,
\]
where $(S_1,S_2,S_3)=(S_x,S_y,S_z)$. Then
\[
\mathcal B(a,a',b,b') = K(M),
\qquad
M:=a(b+b')^{\!T}+a'(b-b')^{\!T}.
\]
Since $M$ is a sum of two rank-one matrices, $\operatorname{rank}(M)\le 2$.

\begin{lemma}[Rotational covariance]\label{lem:covariance}
For any $R,Q\in \SO(3)$ and any $M\in \calM_3(\R)$,
\[
(U_R\otimes U_Q)\,K(M)\,(U_R\otimes U_Q)^{\dagger}=K(RMQ^T).
\]
In particular, $K(M)$ and $K(RMQ^T)$ are unitarily equivalent and therefore have the same spectrum and operator norm.
\end{lemma}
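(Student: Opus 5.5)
The plan is to reduce the statement to the single-party covariance $U_R S(u) U_R^\dagger = S(Ru)$ stated in the setup, and then use linearity of $K$ and of conjugation.

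\emph{Step 1: the adjoint action on the basis $S_i$.} Write $S(u)=\sum_i u_i S_i$. Apply the covariance identity with $u=e_j$, the $j$-th standard basis vector, so that $S(e_j)=S_j$. This gives
\[
U_R S_j U_R^\dagger = S(Re_j)=\sum_{k} (Re_j)_k S_k=\sum_k R_{kj} S_k .
\]
The identity holds for all real $u$, not only unit vectors, since both sides are linear in $u$.

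\emph{Step 2: the tensor product.} Conjugation by $U_R\otimes U_Q$ factorizes on each summand $S_i\otimes S_j$. Therefore
\[
(U_R\otimes U_Q)K(M)(U_R\otimes U_Q)^\dagger=\sum_{i,j}M_{ij}\Bigl(\sum_k R_{ki}S_k\Bigr)\otimes\Bigl(\sum_l Q_{lj}S_l\Bigr).
\]
Collect the coefficient of $S_k\otimes S_l$. It is $\sum_{i,j}R_{ki}M_{ij}Q_{lj}=(RMQ^T)_{kl}$, so the right-hand side equals $K(RMQ^T)$.

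\emph{Step 3: the consequences.} $U_R\otimes U_Q$ is unitary because $U_R$ and $U_Q$ are. Hence $K(M)$ and $K(RMQ^T)$ are unitarily equivalent, and so they have the same spectrum and, being Hermitian, the same operator norm.

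There is no real obstacle here. The only point needing care is the index bookkeeping: checking that the column index of $R$ pairs with the row index of $M$, and that $Q$ enters transposed. This is fixed by reading off the coefficient $R_{kj}$ in Step 1.
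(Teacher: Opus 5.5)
Your proof is correct and follows the paper's argument: the paper likewise uses the adjoint action $U_R S_i U_R^{\dagger}=\sum_k R_{ki}S_k$, which you derive explicitly by taking $u=e_i$ in $U_R S(u) U_R^{\dagger}=S(Ru)$. It then expands the tensor product and reads off the coefficient $(RMQ^T)_{k\ell}$ of $S_k\otimes S_\ell$, exactly as you do. One small remark: unitary conjugation preserves the operator norm for any operator, so the appeal to Hermiticity in your Step 3 is harmless but unnecessary.
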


\begin{proof}
Using $U_RS_iU_R^{\dagger}=\sum_{k=1}^3R_{ki}S_k$ and $U_QS_jU_Q^{\dagger}=\sum_{\ell=1}^3Q_{\ell j}S_{\ell}$, we obtain
\begin{align*}
(U_R\otimes U_Q)K(M)(U_R\otimes U_Q)^{\dagger}
&=\sum_{i,j=1}^3 M_{ij}\,(U_RS_iU_R^{\dagger})\otimes (U_QS_jU_Q^{\dagger})\\
&=\sum_{i,j=1}^3 M_{ij}\Bigl(\sum_{k=1}^3R_{ki}S_k\Bigr)\otimes\Bigl(\sum_{\ell=1}^3Q_{\ell j}S_{\ell}\Bigr)\\
&=\sum_{k,\ell=1}^3 (RMQ^T)_{k\ell}\,S_k\otimes S_{\ell}\\
&=K(RMQ^T).
\end{align*}
\end{proof}

\begin{lemma}\label{lem:svd}
Let $M\in \calM_3(\R)$ have rank at most $2$. Then there exist $R,Q\in \SO(3)$ and singular values $s,t\ge 0$ of $M$ such that
\[
RMQ^T=\diag(s,0,t).
\]
Consequently, $K(M)$ is unitarily equivalent to
\[
H_{s,t}:=s\,S_x\otimes S_x+t\,S_z\otimes S_z.
\]
\end{lemma}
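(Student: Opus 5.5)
The argument is a signed singular value decomposition, followed by a permutation of axes so that the zero singular value sits in the middle slot, and finally an appeal to Lemma~\ref{lem:covariance}.

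\emph{Step 1: signed SVD.} Take an ordinary SVD $M=U\Sigma V^T$ with $U,V\in \justO(3)$ and $\Sigma=\diag(\sigma_1,\sigma_2,\sigma_3)$, where $\sigma_1\ge\sigma_2\ge\sigma_3\ge 0$. Because $\operatorname{rank}(M)\le 2$, we have $\sigma_3=0$. If $\det U=-1$, replace $U$ by $U\,\diag(1,1,-1)$. This leaves $M$ unchanged, since the third column of $U$ is multiplied by $\sigma_3=0$. Do the same with $V$ if $\det V=-1$. After this adjustment $U,V\in\SO(3)$, and $U^TMV=\diag(\sigma_1,\sigma_2,0)$. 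This is the one step that needs care: a general signed SVD only allows one sign flip into the last singular value, and here that flip is harmless precisely because the last singular value vanishes.

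\emph{Step 2: move the zero to the middle.} Let $P$ be the permutation matrix sending $(e_1,e_2,e_3)$ to $(e_1,e_3,e_2)$, corrected to have determinant $+1$. Concretely, take
\[
P=\begin{pmatrix}1&0&0\\0&0&-1\\0&1&0\end{pmatrix}\in\SO(3).
\]
Then
\[
P\,\diag(\sigma_1,\sigma_2,0)\,P^T=\diag(\sigma_1,0,\sigma_2),
\]
since conjugating a diagonal matrix by a signed permutation only permutes the diagonal entries (the signs cancel).

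\emph{Step 3: conclude.} Set $R=PU^T$ and $Q=PV^T$, both in $\SO(3)$. Then
\[
RMQ^T=PU^TMVP^T=\diag(s,0,t),\qquad s=\sigma_1,\ t=\sigma_2 .
\]
By Lemma~\ref{lem:covariance}, $K(M)$ is unitarily equivalent to $K(\diag(s,0,t))=s\,S_x\otimes S_x+t\,S_z\otimes S_z=H_{s,t}$, which is the claim.
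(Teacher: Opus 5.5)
Your proof is correct and follows essentially the same route as the paper. Both arguments take an ordinary SVD, fix the determinants by flipping the sign attached to the vanishing third singular value, conjugate by a rotation swapping the second and third axes (your $P$ is the transpose of the paper's, and either works), and then apply Lemma~\ref{lem:covariance}.
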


\begin{proof}
Ordinary singular-value decomposition gives matrices $O_1,O_2\in \justO(3)$ and numbers $s,t\ge 0$ such that
\[
O_1MO_2^T = D:=\diag(s,t,0).
\]
Let
\[
J:=\diag(1,1,-1).
\]
Because the third singular value is zero, one has $JD=D$ and $DJ=D$.
If $\det(O_1)=-1$, replace $O_1$ by $JO_1$; if $\det(O_2)=-1$, replace $O_2$ by $JO_2$.
These replacements do not change the diagonal form because left or right multiplication by $J$ leaves $D$ unchanged.
Hence we may choose $R_0,Q_0\in \SO(3)$ such that
\[
R_0MQ_0^T=\diag(s,t,0).
\]
Now let
\[
P=
\begin{pmatrix}
1&0&0\\
0&0&1\\
0&-1&0
\end{pmatrix}\in \SO(3).
\]
A direct computation gives
\[
P\,\diag(s,t,0)\,P^T=\diag(s,0,t).
\]
Setting $R:=PR_0$ and $Q:=PQ_0$, we obtain $R,Q\in \SO(3)$ and
\[
RMQ^T = P(R_0MQ_0^T)P^T = \diag(s,0,t).
\]
By Lemma~\ref{lem:covariance}, $K(M)$ is unitarily equivalent to
\[
K(\diag(s,0,t))=s\,S_x\otimes S_x+t\,S_z\otimes S_z,
\]
as claimed.
\end{proof}

\begin{lemma}\label{lem:sumsq}
For the matrix
\[
M=a(b+b')^{\!T}+a'(b-b')^{\!T},
\]
if $s,t\ge 0$ are its first two singular values, then
\[
s^2+t^2=4.
\]
\end{lemma}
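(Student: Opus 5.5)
Since $M$ has rank at most $2$, its third singular value vanishes. The sum of squares of all singular values is then the squared Frobenius norm, so the identity to prove is
\[
s^2+t^2=\|M\|_F^2=\Tr(M^TM).
\]
The plan is to compute $\Tr(M^TM)$ directly from the definition of $M$ in terms of the unit vectors $a,a',b,b'$.

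Write $c:=b+b'$ and $d:=b-b'$, so that $M=ac^T+a'd^T$. Expanding the Frobenius norm gives
\[
\|M\|_F^2=\|a\|^2\|c\|^2+\|a'\|^2\|d\|^2+2\,(a\cdot a')(c\cdot d),
\]
using the identity $\Tr\bigl((xy^T)^T(uv^T)\bigr)=(x\cdot u)(y\cdot v)$ for the cross term. Since $\|b\|=\|b'\|=1$, we have
\[
c\cdot d=\|b\|^2-\|b'\|^2=0,
\]
so the cross term vanishes. With $\|a\|=\|a'\|=1$, what remains is
\[
\|b+b'\|^2+\|b-b'\|^2=2\|b\|^2+2\|b'\|^2=4
\]
by the parallelogram law. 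This gives $s^2+t^2=4$.

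There is no real obstacle here. The only points to state carefully are that the Frobenius norm squared equals the sum of the squared singular values, and that the rank bound noted before Lemma~\ref{lem:covariance} forces the third singular value to be zero. This also justifies calling $s,t$ the singular values appearing in Lemma~\ref{lem:svd}.
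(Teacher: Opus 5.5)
Your proof is correct and is essentially the paper's own argument: set $c=b+b'$, $d=b-b'$, and use $c\cdot d=0$ to kill the cross term in $\|M\|_\mathrm{F}^2$. Then evaluate $\|c\|^2+\|d\|^2=4$ and identify $\|M\|_\mathrm{F}^2$ with $s^2+t^2$ because the third singular value vanishes, which you justify explicitly from the rank bound $\operatorname{rank}(M)\le 2$.
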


\begin{proof}
Set
\[
c:=b+b',\qquad d:=b-b'.
\]
Then
\[
c\cdot d = \|b\|^2-\|b'\|^2 = 1-1 = 0.
\]
Moreover,
\[
M=ac^T+a'd^T.
\]
Using the Frobenius inner product $\langle X,Y\rangle_\mathrm{F}:=\Tr(X^TY)$, together with
\[
\|uv^T\|_\mathrm{F}^2 = \|u\|^2\|v\|^2,
\qquad
\langle uv^T,xy^T\rangle_\mathrm{F}=(u\cdot x)(v\cdot y),
\]
we get
\begin{align*}
\|M\|_\mathrm{F}^2
&=\|ac^T\|_\mathrm{F}^2+\|a'd^T\|_\mathrm{F}^2+2\langle ac^T,a'd^T\rangle_\mathrm{F}\\
&=\|a\|^2\|c\|^2+\|a'\|^2\|d\|^2+2(a\cdot a')(c\cdot d)\\
&=\|c\|^2+\|d\|^2.
\end{align*}
Since $b,b'$ are unit vectors,
\begin{align*}
\|c\|^2+\|d\|^2
&=\|b+b'\|^2+\|b-b'\|^2\\
&=(2+2b\cdot b')+(2-2b\cdot b')\\
&=4.
\end{align*}
On the other hand, the singular values of $M$ are $s,t,0$, so
\[
\|M\|_\mathrm{F}^2=s^2+t^2.
\]
Therefore $s^2+t^2=4$.
\end{proof}

\section{Exact spectrum of the reduced operator}
We now compute the spectrum of
\[
H_{s,t}=s\,S_x\otimes S_x+t\,S_z\otimes S_z.
\]
Let $\{\ket{1},\ket{0},\ket{-1}\}$ be the $S_z$-eigenbasis, so that
\[
S_z\ket{m}=m\ket{m},
\qquad m\in\{1,0,-1\},
\]
and
\begin{align*}
S_x\ket{1} & =\frac{1}{\sqrt2}\ket{0},
\qquad
S_x\ket{0}=\frac{1}{\sqrt2}(\ket{1}+\ket{-1}),\\
S_x\ket{-1}&=\frac{1}{\sqrt2}\ket{0}.
\end{align*}
Write $\ket{m,n}:=\ket{m}\otimes\ket{n}$.

Consider the orthogonal decomposition
\[
\mathbb C^3\otimes\mathbb C^3 = V_4\oplus V_5,
\]
where
\[
V_4:=\operatorname{span}\{\ket{1,0},\ket{0,1},\ket{0,-1},\ket{-1,0}\}
\]
and
\[
V_5:=\operatorname{span}\{\ket{1,1},\ket{1,-1},\ket{0,0},\ket{-1,1},\ket{-1,-1}\}.
\]
A direct computation shows that both subspaces are invariant under $H_{s,t}$.

\begin{proposition}\label{prop:spectrum}
The spectrum of $H_{s,t}$ is
\[
\operatorname{spec}(H_{s,t})=
\bigl\{0,0,0,\pm s,\pm t,\pm\sqrt{s^2+t^2}\bigr\}.
\]
Hence
\[
\|H_{s,t}\|=\sqrt{s^2+t^2}.
\]
\end{proposition}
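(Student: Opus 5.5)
The proof uses the invariant decomposition $\mathbb C^3\otimes\mathbb C^3=V_4\oplus V_5$ introduced above. We compute the $4\times 4$ and $5\times 5$ matrices of $H_{s,t}$ in the given bases and find their characteristic polynomials. First we confirm invariance. $S_z\otimes S_z$ is diagonal, with $\ket{m,n}\mapsto mn\ket{m,n}$, so it acts as $0$ on $V_4$ and as $\diag(1,-1,0,-1,1)$ on $V_5$. Next, $S_x$ changes $m$ by $\pm1$, so $S_x\otimes S_x$ changes the parity of $m$ and of $n$ simultaneously. Vectors in $V_4$ have exactly one of $m,n$ odd; vectors in $V_5$ have both odd or both even. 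This parity class is preserved, which gives invariance.

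On $V_4$, with the ordered basis $\ket{1,0},\ket{0,1},\ket{0,-1},\ket{-1,0}$, we have $H_{s,t}=sS_x\otimes S_x$. We get $S_x\otimes S_x\ket{1,0}=\tfrac12(\ket{0,1}+\ket{0,-1})$, and similarly for the other basis vectors. The restriction is therefore $\tfrac s2$ times the adjacency matrix of the complete bipartite graph $K_{2,2}$ between $\{\ket{\pm1,0}\}$ and $\{\ket{0,\pm1}\}$. That adjacency matrix has eigenvalues $\pm2,0,0$, so the spectrum on $V_4$ is $\{\pm s,0,0\}$.

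On $V_5$, with the ordered basis $\ket{1,1},\ket{1,-1},\ket{0,0},\ket{-1,1},\ket{-1,-1}$, the operator $S_x\otimes S_x$ sends each $\ket{\pm1,\pm1}$ to $\tfrac12\ket{0,0}$. It sends $\ket{0,0}$ to $\tfrac12$ times the sum of the four corner vectors. So
\[
H_{s,t}|_{V_5}=\begin{pmatrix} t&0&s/2&0&0\\ 0&-t&s/2&0&0\\ s/2&s/2&0&s/2&s/2\\ 0&0&s/2&-t&0\\ 0&0&s/2&0&t\end{pmatrix}.
\]
To diagonalize this, we use the symmetry $\ket{m,n}\leftrightarrow\ket{-m,-n}$. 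The antisymmetric vectors $\ket{1,1}-\ket{-1,-1}$ and $\ket{1,-1}-\ket{-1,1}$ are annihilated by the $S_x\otimes S_x$ part and are eigenvectors with eigenvalues $t$ and $-t$. The symmetric sector is spanned by $e_+=(\ket{1,1}+\ket{-1,-1})/\sqrt2$, $e_-=(\ket{1,-1}+\ket{-1,1})/\sqrt2$ and $\ket{0,0}$. In this basis the restriction is
\[
\begin{pmatrix} t&0&s/\sqrt2\\ 0&-t&s/\sqrt2\\ s/\sqrt2&s/\sqrt2&0\end{pmatrix}.
\]
Its characteristic polynomial is $-\lambda(\lambda^2-t^2-s^2)$; we will verify this by direct expansion. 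The eigenvalues are therefore $0$ and $\pm\sqrt{s^2+t^2}$.

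Collecting the eigenvalues gives $\{0,0\}\cup\{\pm s\}$ from $V_4$ and $\{\pm t\}\cup\{0,\pm\sqrt{s^2+t^2}\}$ from $V_5$. This is nine eigenvalues in total, which matches the claimed spectrum. Because $H_{s,t}$ is Hermitian, its operator norm is the largest absolute eigenvalue, which is $\sqrt{s^2+t^2}$ since $s,t\ge0$. The only step requiring care is the $3\times3$ determinant, which is a short expansion along the last row. Combined with Lemmas~\ref{lem:svd} and~\ref{lem:sumsq}, this gives $\|\mathcal B\|=2$ and hence Theorem~\ref{thm:main}.
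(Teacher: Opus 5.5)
Your proposal is correct and follows the paper's proof essentially step for step. You use the same $V_4\oplus V_5$ decomposition, the same antisymmetric eigenvectors with eigenvalues $\pm t$, and the same $3\times 3$ symmetric block with characteristic polynomial $\pm\lambda(\lambda^2-s^2-t^2)$. Your parity argument for invariance and the $K_{2,2}$ adjacency-spectrum shortcut on $V_4$ are cleaner justifications of steps the paper simply labels as direct computation.
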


\begin{proof}
First restrict to $V_4$. In the ordered basis
\[
\ket{1,0},\ \ket{0,1},\ \ket{0,-1},\ \ket{-1,0},
\]
one finds
\[
H_{s,t}\big|_{V_4}=
\begin{pmatrix}
0&s/2&s/2&0\\
s/2&0&0&s/2\\
s/2&0&0&s/2\\
0&s/2&s/2&0
\end{pmatrix}.
\]
Its characteristic polynomial is
\[
\lambda^2(\lambda^2-s^2),
\]
so the eigenvalues on $V_4$ are
\[
0,\ 0,\ s,\ -s.
\]

Now restrict to $V_5$. Define
\[
u_1:=\ket{1,1}-\ket{-1,-1},
\qquad
u_2:=\ket{1,-1}-\ket{-1,1}.
\]
A direct computation gives
\[
H_{s,t}u_1=tu_1,
\qquad
H_{s,t}u_2=-tu_2.
\]
On the remaining three-dimensional subspace, use the orthonormal basis
\begin{align*}
w_1 :=\frac{\ket{1,1}+\ket{-1,-1}}{\sqrt2},
\qquad
w_2:=\frac{\ket{1,-1}+\ket{-1,1}}{\sqrt2},\qquad
w_3 :=\ket{0,0}.
\end{align*}
In this basis,
\[
H_{s,t}\big|_{\operatorname{span}\{w_1,w_2,w_3\}}=
\begin{pmatrix}
t&0&s/\sqrt2\\
0&-t&s/\sqrt2\\
s/\sqrt2&s/\sqrt2&0
\end{pmatrix}.
\]
Its characteristic polynomial is
\[
\lambda\bigl(\lambda^2-s^2-t^2\bigr),
\]
so the eigenvalues on this block are
\[
0,\ \sqrt{s^2+t^2},\ -\sqrt{s^2+t^2}.
\]
Combining the two invariant subspaces yields
\[
\operatorname{spec}(H_{s,t})=
\bigl\{0,0,0,\pm s,\pm t,\pm\sqrt{s^2+t^2}\bigr\}.
\]
Since $\sqrt{s^2+t^2}\ge s,t$, we have
\[
\|H_{s,t}\|=\sqrt{s^2+t^2}.
\]
\end{proof}

\section{Proof of Theorem~\ref{thm:main}}
\begin{proof}[Proof of Theorem~\ref{thm:main}]
Let
\[
M:=a(b+b')^{\!T}+a'(b-b')^{\!T}.
\]
Then $\mathcal B(a,a',b,b')=K(M)$. By Lemma~\ref{lem:svd}, there exist $s,t\ge 0$ such that $\mathcal B(a,a',b,b')$ is unitarily equivalent to
\[
H_{s,t}=s\,S_x\otimes S_x+t\,S_z\otimes S_z.
\]
By Lemma~\ref{lem:sumsq}, one has $s^2+t^2=4$. Hence Proposition~\ref{prop:spectrum} implies
\[
\|\mathcal B(a,a',b,b')\|=\|H_{s,t}\|=\sqrt{s^2+t^2}=2.
\]
Therefore, for every density operator $\rho$, H\"older's inequality for Schatten norms gives
\[
\bigl|\Tr\bigl(\rho\,\mathcal B(a,a',b,b')\bigr)\bigr|
\le \|\rho\|_1\,\|\mathcal B(a,a',b,b')\| = 2.
\]

This proves the theorem.
\end{proof}

\begin{corollary}
No state on $\mathbb C^3\otimes\mathbb C^3$ violates the CHSH inequality when each party is restricted to spin-$1$ observables of the form $S(u)$ with $\|u\|=1$.
\end{corollary}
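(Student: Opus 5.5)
The corollary follows from Theorem~\ref{thm:main}; the only work is to match the physical CHSH setting to the operator statement proved there. I will fix an arbitrary state $\rho$ on $\mathbb C^3\otimes\mathbb C^3$. Alice's two spin-$1$ observables are $S(a),S(a')$ and Bob's are $S(b),S(b')$, where $a,a',b,b'$ are unit vectors. Each observable has spectrum $\{-1,0,1\}$, as noted in the setup. Its outcomes therefore lie in $[-1,1]$, which is the range for which the CHSH inequality is the classical (local hidden variable) bound. The CHSH expression built from the correlation functions is
\[
\Tr\bigl(\rho\,S(a)\otimes S(b)\bigr)+\Tr\bigl(\rho\,S(a)\otimes S(b')\bigr)+\Tr\bigl(\rho\,S(a')\otimes S(b)\bigr)-\Tr\bigl(\rho\,S(a')\otimes S(b')\bigr).
\]
By linearity of the trace, this equals $\Tr\bigl(\rho\,\mathcal B(a,a',b,b')\bigr)$.

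Next I invoke Theorem~\ref{thm:main}. It gives $|\Tr(\rho\,\mathcal B(a,a',b,b'))|\le 2$ for every density operator and every choice of unit vectors. Hence the CHSH expression never exceeds $2$ in absolute value, so no violation occurs. This holds whether $\rho$ is pure or mixed, separable or entangled. In particular it covers the nonseparable pure states in the Hanotel--Loubenets conjecture.

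I expect no real obstacle. The one point to check is that ``violation'' in the corollary means exceeding the classical bound $2$, and that this bound is the right one for observables with spectrum in $[-1,1]$. I would state that convention explicitly and note that the bound is attained: by the proof of the theorem, $\|\mathcal B\|=2$ exactly. So the theorem is sharp and the corollary is the precise nonviolation statement.
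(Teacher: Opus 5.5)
Your proposal is correct and matches the paper, which proves the corollary in one line by observing that it is exactly the statement of Theorem~\ref{thm:main}. Your extra step, identifying the CHSH expression with $\Tr(\rho\,\mathcal B(a,a',b,b'))$ by linearity of the trace, just makes that identification explicit. Your sharpness remark ($\|\mathcal B\|=2$) is consistent with the paper's tightness example.
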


\begin{proof}
This is exactly the statement of Theorem~\ref{thm:main}.
\end{proof}

The bound is tight. For example, choosing $a=a'=b=b'=(0,0,1)$ gives
\[
\mathcal B(a,a',b,b') = 2S_z\otimes S_z,
\]
and the product state $\ket{1,1}$ attains expectation value $2$.

\bibliographystyle{unsrt}
\bibliography{main}

\end{document}